\documentclass{article}
\usepackage[latin1]{inputenc}
\usepackage[english]{babel}
\usepackage{amsfonts}
\usepackage{amssymb}
\usepackage{amsmath}
\usepackage{latexsym}
\newtheorem{theorem}{Theorem}
\newtheorem{lemma}[theorem]{Lemma}
\newtheorem{corollary}[theorem]{Corollary}
\newenvironment{proof}{\emph{Proof} : }{\begin{flushright}$\Box$\end{flushright}}
\newenvironment{remark}{\textbf{\emph{Remark}} : }{\vspace{0.2cm}}
\newenvironment{acknowledgements}{\textbf{\emph{Acknowledgements}} : }{\vspace{0.2cm}}
\newcommand{\F}[1]{\mathbb{F}_{#1}}

\begin{document}

\title{A new proof of Delsarte, Goethals and Mac Williams theorem on minimal weight codewords of generalized Reed-Muller codes}

\author{Elodie LEDUCQ}

\maketitle

\begin{abstract}
We give a new proof of Delsarte, Goethals and Mac williams theorem on minimal weight codewords of generalized Reed-Muller codes published in 1970. To prove this theorem, we consider intersection of support of minimal weight codewords with affine hyperplanes and we proceed by recursion.
\end{abstract}

\section{Introduction}
In the appendix of \cite{DGMW}, Delsarte, Goethals and Mac Williams prove the theorem \ref{min} below. However, at the beginning of their proof, they point out that "it would be very desirable to find a more sophisticated and shorter proof".
\\In this paper, we give a new proof of this theorem that we hope is simpler.
\\\\Let $q=p^n$, $p$ being prime number.\\ We identify the $\F{q}$-algebra $B_m^q=\F{q}[X_1,\ldots,X_m]/(X_1^q-X_1,\ldots,X_m^q-X_m)$ to the $\F{q}$-algebra of  the functions from $\F{q}^m$ to $\F{q}$ through the isomorphism $P\mapsto (x\mapsto P(x))$.
\\For $f\in B^m_q$, let $S_f=\{x\in\F{q}^m, f(x)\neq0\}$ the support of $f$, and \\$|f|=\mathrm{Card}(S_f)$  the weight of $f$. The Hamming distance in $B_m^q$ is denoted by $d(.,.)$.
\\For $0\leq r\leq m(q-1)$, the $r$th order generalized Reed-Muller code of length $q^m$ is $$R_q(r,m)=\{P\in B_m^q, \mathrm{deg}(P)\leq r\}$$ where $\mathrm{deg}(P)$ is the degree of the representative of $P$ with degree at most $q-1$ in each variable.
\\The affine group $\mathrm{GA}_m(\F{q})$ acts on $R_q(r,m)$ by its natural action. The minimum weight of $R_q(r,m)$ is $(q-s)q^{m-t-1}$, where $r=t(q-1)+s$, $0\leq s\leq q-2$ (see \cite{DGMW}).
\\\\The following theorem gives the codeword of minimum weight of $R_q(r,m)$

\begin{theorem}\label{min}Let $r=t(q-1)+s<m(q-1)$. The minimal weight codewords of $R_q(r,m)$ are codewords of $R_q(r,m)$ whose support is the union of $(q-s)$ distinct parallel affine subspaces of codimension $t+1$ included in an affine subspace of codimension $t$. \end{theorem}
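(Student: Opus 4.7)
My plan is an induction on $m$, reducing the $m$-variable problem to $m-1$ variables by restricting a minimum weight codeword $f$ to parallel affine hyperplanes. The base case $m=1$ is immediate: then $t=0$, $r=s<q-1$, and a polynomial of degree at most $s$ whose nonzero set in $\F{q}$ has cardinality $q-s$ must have exactly $s$ distinct roots, hence factors as $c\prod_{i=1}^{s}(X-a_i)$, and its support is the required union of $q-s$ points of $\F{q}$.

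For the inductive step, write $|f|=(q-s)q^{m-t-1}$, choose a nonzero linear form $L$, and partition $\F{q}^m$ into the $q$ parallel hyperplanes $H_\lambda=\{L=\lambda\}\cong\F{q}^{m-1}$. The restrictions $f_\lambda:=f|_{H_\lambda}$ lie in $R_q(r,m-1)$ and satisfy $\sum_{\lambda\in\F{q}}|f_\lambda|=|f|$. When $t<m-1$, the code $R_q(r,m-1)$ has minimum weight $(q-s)q^{m-t-2}$, so every nonzero $f_\lambda$ has weight at least $(q-s)q^{m-t-2}$. If we can choose $L$ so that all $q$ restrictions are nonzero, then the sum relation is tight and forces each $f_\lambda$ to be a minimum weight codeword of $R_q(r,m-1)$. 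A preliminary step is therefore to prove existence of such an $L$; I would argue that otherwise $f$ is divisible by too many pairwise coprime linear factors, which contradicts either the degree bound or the extremality of $|f|$. The boundary case $t=m-1$ (so $R_q(r,m-1)=B_{m-1}^q$ and $|f|=q-s$) I would handle separately, by iterating the restriction procedure down to a line and arguing that the $q-s$ points of $S_f$ must then be collinear.

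Applying the induction hypothesis to each $f_\lambda$ then describes $S_{f_\lambda}$ as a union of $q-s$ parallel codimension-$(t+1)$ affine subspaces of $H_\lambda$ contained in an affine codimension-$t$ subspace $V_\lambda\subset H_\lambda$. The crux of the proof, which I expect to be the main obstacle, is to show that these per-hyperplane descriptions glue consistently: the $V_\lambda$ must combine into a single affine codimension-$t$ subspace $V\subset\F{q}^m$, and the codimension-$(t+1)$ pieces across different $H_\lambda$ must align into $q-s$ parallel subspaces of $\F{q}^m$. To force this alignment I would apply the slicing argument to several independent pencils of hyperplanes; the compatibility of the resulting decompositions of $S_f$ should pin down the global affine structure, completing the induction.
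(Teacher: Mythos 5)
Your slicing strategy (restrict $f$ to the $q$ hyperplanes of a pencil, use the minimum-weight bound of $R_q(r,m-1)$ to force each nonzero restriction to be extremal, then invoke an induction hypothesis in $m-1$ variables) is coherent as a skeleton, but the two steps you defer are exactly where all the content of the theorem lives, and neither of your sketches closes them. First, the existence of a pencil all of whose members meet $S_f$: the ``too many pairwise coprime linear factors'' argument does not follow from iterating Lemma \ref{2}, because after writing $f=(\ell_1-\lambda_1)Q$ you only know that $Q$ vanishes on $\{\ell_2=\lambda_2\}\setminus\{\ell_1=\lambda_1\}$, not on the whole second hyperplane, so the degrees do not simply accumulate across directions. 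The claim is also genuinely delicate rather than generic: for $t=m-1$ and $s\geq1$ the support consists of $q-s$ collinear points and \emph{every} pencil contains an empty hyperplane, so any proof of existence for $t\leq m-2$ must already exploit structural information about $S_f$ that you do not yet have at that stage of the induction.

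Second, and more seriously, the gluing. Even granting that each $f_\lambda$ is a minimum weight codeword of $R_q(r,m-1)$, the induction hypothesis hands you, for each $\lambda$ separately, a codimension-$t$ subspace $V_\lambda\subseteq H_\lambda$ and a decomposition of $S_{f_\lambda}$ into $q-s$ parallel codimension-$(t+1)$ pieces; nothing so far prevents the directions of the $V_\lambda$, or of the pieces inside them, from varying with $\lambda$. Ruling out such a ``shearing'' family of per-slice decompositions is essentially the whole theorem, and ``compatibility across several independent pencils should pin down the global affine structure'' is a hope, not an argument: you would need to exhibit the concrete mechanism by which the degree bound on $f$ forbids it. It is worth noting that the paper sidesteps the gluing problem entirely by recursing on $t$ rather than on $m$: Lemma \ref{5} (obtained by factoring out the linear forms of the empty hyperplanes and counting) combined with the classification at parameter $t+1$ in the \emph{same} ambient space shows that if $S$ escaped a suitably chosen hyperplane $H$, then $S\cap H$ would be trapped in a codimension-$(t+1)$ subspace, contradicting the presence of $m-t$ affinely independent points of $S$ placed in $H$; this forces $S$ into a codimension-$t$ subspace, after which Corollary \ref{3} reduces everything to the $t=0$ case, settled by Lemma \ref{4}. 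As written, your proposal is a plan whose hardest step is acknowledged but not carried out, so it does not yet constitute a proof.
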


\begin{remark} \begin{enumerate}\item Clearly, codewords of this form are of minimal weight. \item Using lemma \ref{2} and corollary \ref{3} below, this theorem means that codewords of minimal weight are equivalent, under the action of the affine group, to a codeword of the following form:
$f(x)=c\displaystyle\prod_{i=1}^t(x_i^{q-1}-1)\prod_{j=1}^s(x_{t+1}-b_j)$
where $c\in\F{q}^*$, $b_j$ are distinct elements of $\F{q}$.\end{enumerate}\end{remark}

\section{Proof of Theorem \ref{min}}

In this paper, we use freely the two following lemmas and their corollary proved in \cite{DGMW} p. 435.

\begin{lemma}\label{2} If $P(x)=0$ whenever $x_1=a$, then $P(x)=(x_1-a)Q(x)$ where $\mathrm{deg}_{x_1}(Q)\leq \mathrm{deg}_{x_1}(P)-1$.
\end{lemma}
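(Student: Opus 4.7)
The natural approach is a Euclidean division in the variable $x_1$: regard $P$ as a polynomial in $x_1$ with coefficients in $\F{q}[x_2,\ldots,x_m]$, and divide by the monic polynomial $(x_1-a)$. This yields a unique decomposition
\[
P(x) = (x_1 - a)\, Q(x) + R(x_2,\ldots,x_m),
\]
with $\mathrm{deg}_{x_1}(Q) \leq \mathrm{deg}_{x_1}(P) - 1$ and $R$ independent of $x_1$.

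Evaluating at $x_1 = a$ kills the first summand, so $R(x_2,\ldots,x_m) = P(a,x_2,\ldots,x_m) = 0$ for every $(x_2,\ldots,x_m) \in \F{q}^{m-1}$, by hypothesis. Because $P$ has degree at most $q-1$ in each variable and division in $x_1$ does not raise the degree in the other variables, $R$ also has degree at most $q-1$ in each of $x_2,\ldots,x_m$. The identification of reduced polynomials with functions built into the definition of $B_{m-1}^q$ then forces $R = 0$ in $B_{m-1}^q$, and hence $P = (x_1 - a)Q$ in $B_m^q$.

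The argument closes with a routine check that the identity survives the reduction modulo the $x_i^q - x_i$: since $\mathrm{deg}_{x_1}(Q) \leq q-2$, the product $(x_1 - a)Q$ has $x_1$-degree at most $q-1$, so no reduction in $x_1$ is required, and the degrees in the remaining variables are unaffected. The only conceptually delicate point is the passage between polynomial and function equalities in the quotient ring $B_m^q$; it is handled by the per-variable degree bound, after which the proof is purely formal.
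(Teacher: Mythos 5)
Your Euclidean division of $P$ by the monic $x_1-a$, followed by the observation that the remainder $R$ is a reduced polynomial in $x_2,\ldots,x_m$ vanishing identically and hence zero, is correct and gives exactly the stated degree bound $\mathrm{deg}_{x_1}(Q)\leq\mathrm{deg}_{x_1}(P)-1$. The paper does not reprove this lemma but quotes it from Delsarte, Goethals and Mac Williams, and your argument is essentially the standard one carried out there, so there is nothing to add.
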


\begin{remark}In \cite{DGMW}, the lemma A1.1 says that the exponent of $x_1$ in $Q$ is at most $q-2$, but they actually prove the above lemma.\end{remark}

\begin{corollary}\label{3}If $P(x)=0$ unless $x_1=b$, then $P(x)=(1-(x_1-b)^{q-1})Q(x_2,\ldots,x_m)$.\end{corollary}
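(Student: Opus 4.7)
The plan is to peel off one linear factor for each forbidden value of $x_1$ using Lemma \ref{2}, and then identify the resulting product with $1-(x_1-b)^{q-1}$ up to sign inside $B_m^q$.

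More precisely, for every $a\in\F{q}$ with $a\neq b$ the hypothesis gives $P(x)=0$ whenever $x_1=a$. I would apply Lemma \ref{2} iteratively over the $q-1$ elements $a\in\F{q}\setminus\{b\}$: first write $P=(x_1-a_1)P_1$ with $\deg_{x_1}(P_1)\leq\deg_{x_1}(P)-1$; then observe that evaluating at $x_1=a_2$ (for $a_2\neq a_1,b$) forces $P_1$ to vanish there, since $a_2-a_1\neq 0$, so Lemma \ref{2} applies again. After $q-1$ such steps one obtains
\[
P(x)=\Bigl(\prod_{a\in\F{q}\setminus\{b\}}(x_1-a)\Bigr)R(x),
\]
with $\deg_{x_1}(R)\leq\deg_{x_1}(P)-(q-1)\leq 0$, i.e.\ $R$ depends only on $x_2,\ldots,x_m$.

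The remaining step is to compare, inside $B_m^q$, the polynomial $\prod_{a\neq b}(x_1-a)$ (degree $q-1$ in $x_1$, leading coefficient $1$) with $1-(x_1-b)^{q-1}$ (degree $q-1$ in $x_1$, leading coefficient $-1$). Both represent functions on $\F{q}^m$: the first vanishes for $x_1\neq b$ and equals $\prod_{a\neq b}(b-a)=\prod_{c\in\F{q}^*}c=-1$ at $x_1=b$ (Wilson's identity in $\F{q}$), while the second vanishes for $x_1\neq b$ and equals $1$ at $x_1=b$. Since two polynomials of $x_1$-degree at most $q-1$ that represent the same function of $x_1\in\F{q}$ coincide in $B_m^q$, we get $\prod_{a\neq b}(x_1-a)=-\bigl(1-(x_1-b)^{q-1}\bigr)$, and setting $Q(x_2,\ldots,x_m)=-R(x_2,\ldots,x_m)$ yields the desired factorization.

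There is no real obstacle: the only delicate point is justifying the equality $\prod_{a\neq b}(x_1-a)=-(1-(x_1-b)^{q-1})$ as elements of $B_m^q$ rather than as abstract polynomials, which is handled by the function-identification isomorphism recalled in the introduction, together with the degree bound that keeps us in the canonical reduced form. Everything else is a direct induction using Lemma \ref{2}.
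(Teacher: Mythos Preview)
Your argument is correct. Note, however, that the paper itself does not give a proof of this corollary: it simply cites \cite{DGMW}, p.~435, for Lemmas~\ref{2} and~\ref{4} and for Corollary~\ref{3}. What you wrote is essentially the standard derivation one would expect (and presumably the one in \cite{DGMW}): apply Lemma~\ref{2} once for each $a\in\F{q}\setminus\{b\}$, then identify the product of linear factors.

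One small simplification: your identity $\prod_{a\neq b}(x_1-a)=-(1-(x_1-b)^{q-1})$ in fact holds already in $\F{q}[x_1]$, not merely in $B_m^q$. Indeed, since $(x_1-b)^q-(x_1-b)=x_1^q-x_1=\prod_{a\in\F{q}}(x_1-a)$, dividing by $(x_1-b)$ gives $(x_1-b)^{q-1}-1=\prod_{a\neq b}(x_1-a)$ as honest polynomials. So there is no need to appeal to the function-identification isomorphism for this step; but of course your evaluation argument via $\prod_{c\in\F{q}^*}c=-1$ is also perfectly valid.
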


\begin{lemma}\label{4} Let $S$ be a subset of $\F{q}^m$, such that $\mathrm{Card}(S)=tq^n<q^m$, $0<t<q$.
\\Assume that for any hyperplane of $\F{q}^m$, either $\mathrm{Card}(S\cap H)=0$ or \\$\mathrm{Card}(S\cap H)\geq tq^{n-1}$. Then there exists an affine hyperplane of $\F{q}^m$ which does not meet $S$.\end{lemma}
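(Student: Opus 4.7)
My plan is to argue by contradiction: assume that every affine hyperplane of $\F{q}^m$ meets $S$, and derive a divisibility obstruction by iterating an averaging argument.

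The starting point is to partition $\F{q}^m$ by a parallel class of $q$ hyperplanes. Their intersections with $S$ sum to $|S| = tq^n$, and by the contradiction hypothesis each of the $q$ summands is at least $tq^{n-1}$. Since $q \cdot tq^{n-1} = tq^n$, equality must hold, so $|S \cap H| = tq^{n-1}$ for every hyperplane $H$ (of every direction).

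Next I would iterate this to higher codimension by induction on $k$, proving that for every affine subspace $K$ of codimension $k$ with $1 \leq k \leq n+1$, one has $|S \cap K| = tq^{n-k}$. Given $K$ of codimension $k+1$, consider the pencil of codimension-$k$ affine subspaces containing $K$; there are $\frac{q^{k+1}-1}{q-1}$ of them, each point of $K$ belongs to all of them, and each point outside $K$ belongs to exactly one (namely the affine span of $K$ together with that point). Counting incidences of $S$ with this pencil in two ways gives
\[
\frac{q^{k+1}-1}{q-1}\, tq^{n-k} \;=\; \frac{q^{k+1}-1}{q-1}\, |S \cap K| \;+\; \bigl(tq^n - |S \cap K|\bigr),
\]
which simplifies to $|S \cap K| = tq^{n-k-1}$, completing the induction step.

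Finally, taking $k = n+1$ (which is legal because $tq^n < q^m$ with $t \geq 1$ forces $n+1 \leq m$) yields $|S \cap K| = t/q$, which is not a non-negative integer since $0 < t < q$. This contradiction proves the lemma. The only step requiring care is the pencil-counting identity; conceptually, once the first step forces equality everywhere in every parallel class, every lower-dimensional count is rigidly determined and must eventually collide with integrality.
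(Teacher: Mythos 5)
Your proof is correct, but note that there is nothing in the paper to compare it with: Lemma \ref{4} is one of the statements the author imports from \cite{DGMW} (p.~435) and uses as a black box, so the paper contains no proof of it. Taken on its own terms, your argument is sound. Under the contradiction hypothesis, summing over one parallel class at a time does force $\mathrm{Card}(S\cap H)=tq^{n-1}$ for every affine hyperplane $H$. The pencil count in the inductive step is also right: a codimension-$(k+1)$ affine subspace $K$ is contained in exactly $\frac{q^{k+1}-1}{q-1}$ codimension-$k$ affine subspaces (one for each line of $\F{q}^m/\vec{K}$), every point of $K$ lies on all of them, and every point outside $K$ lies on exactly one, namely the affine span of $K$ and that point; your incidence identity then does simplify to $\mathrm{Card}(S\cap K)=tq^{n-k-1}$, since the coefficient of $\mathrm{Card}(S\cap K)$ is $\frac{q^{k+1}-q}{q-1}$ and the right-hand side works out to $tq^{n-k}\frac{q^k-1}{q-1}$. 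Finally, $q^n\leq tq^n<q^m$ gives $n<m$, so codimension-$(n+1)$ affine subspaces exist, and $0<t/q<1$ rules out integrality. This is a clean, self-contained substitute for the citation; the only cosmetic remark is that it would read slightly better to run the induction only up to $k=n$ and present the passage to codimension $n+1$ as the terminal contradiction rather than as part of the inductive statement.
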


Now, we prove a lemma that will be crucial in our proof of theorem \ref{min} :

\begin{lemma}\label{5}Let $r=t(q-1)+s$, $0\leq s\leq q-2$, $f$ be a minimal weight codeword of $R_q(r,m)$ and $S=S_f$.
\\If $H$ is an hyperplane of $\F{q}^m$, such that $S\cap H\neq\emptyset$ and $S\cap H\neq S$, then
either $S$ meets all hyperplanes parallel to $H$ or $S$ meets $q-s$ hyperplanes parallel to $H$ in $q^{m-t-1}$ points.\end{lemma}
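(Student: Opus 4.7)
My plan is to use the affine group action to normalize the parallel family to $\{H_a = \{x_1 = a\}\}_{a \in \F{q}}$, factor $f$ using lemma \ref{2}, and then compare intersection counts with the minimum weight of a Reed-Muller code in $m-1$ variables. Set $A = \{a \in \F{q} : S \cap H_a \neq \emptyset\}$ and $k = |A|$; the hypotheses $S \cap H \neq \emptyset$ and $S \not\subseteq H$ give $k \geq 2$. If $k = q$, the first alternative holds, so I assume $k < q$ and aim to prove $k = q-s$ with $|S \cap H_a| = q^{m-t-1}$ for each $a \in A$.

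Because $f$ vanishes on $H_a$ for every $a$ in $A' = \F{q} \setminus A$, iterating lemma \ref{2} will yield
\[ f = \prod_{a \in A'}(x_1 - a) \cdot g \]
with $\deg g \leq r - (q-k)$ (each step reducing the total degree by one). For $a \in A$, the evaluation $\prod_{b \in A'}(a-b)$ is a nonzero scalar, so $g|_{H_a}$ is a nonzero element of $R_q(r-(q-k), m-1)$ and $|S \cap H_a| = |g|_{H_a}|$.

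The heart of the proof will be lower-bounding each $|g|_{H_a}|$ by the minimum weight of $R_q(r-(q-k), m-1)$ and summing over $a \in A$. Writing $r - (q-k) = t'(q-1) + s'$ with $0 \leq s' \leq q-2$ produces two cases. If $k \geq q-s$, then $t' = t$ and $s' = s+k-q$, so the minimum weight is $(2q-s-k)q^{m-t-2}$; summing and comparing with $|S| = (q-s)q^{m-t-1}$ gives $(q-k)(q-s-k) \geq 0$, which forces $k = q-s$ and in turn $|S \cap H_a| = q^{m-t-1}$ for every $a \in A$. If instead $k < q-s$, then $t' = t-1$ and $s' = s+k-1$, so the minimum weight is $(q-s-k+1)q^{m-t-1}$; summing now yields $q-s \leq k$ (using $k \geq 2$), contradicting the case hypothesis.

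The main obstacle I anticipate is handling the boundary cases. When $t = 0$, the case $k < q-s$ is automatically excluded because $\deg g$ would be negative. When $t = m-1$, the auxiliary code $R_q(r-(q-k), m-1)$ may equal the full function space on $\F{q}^{m-1}$, so the minimum weight formula must be replaced by $1$, and I must check that the summation still yields $k = q-s$ with $|S \cap H_a| = 1 = q^{m-t-1}$. Outside these borderline cases, the algebraic manipulations above close the argument.
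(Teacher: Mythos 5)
Your proof is correct and follows essentially the same route as the paper: normalize the parallel family to $\{x_1=a\}$, factor out the linear forms of the missed hyperplanes via Lemma \ref{2}, lower-bound each nonempty intersection by the minimum weight of a suitable generalized Reed--Muller code, and sum against $\mathrm{Card}(S)=(q-s)q^{m-t-1}$. The differences are organizational rather than substantive: the paper factors with multiplicities and splits cases on the total multiplicity $d$ versus $s$, working with $(1-(x_1-c)^{q-1})P$ in $m$ variables, whereas you split on the number $k$ of met hyperplanes versus $q-s$ and restrict to codes in $m-1$ variables; both versions reduce to the same pair of inequalities and the same conclusion $k=q-s$ with each intersection of size $q^{m-t-1}$.
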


\begin{proof}Since an affine transformation does not change weight, we can assume that $H=\{x,x_1=0\}$.
\\Now assume that $S$ does not meet $k$ hyperplanes parallel to $H$, $k\geq1$. As $S\cap H\neq\emptyset$ and $S\cap H\neq H$, we have $k\leq q-2$. By lemma \ref{2}, we can write $$f(x)=(x_1-b_1)^{\alpha_1}\ldots(x_1-b_k)^{\alpha_k}P(x)$$ where $S_P$ meets all hyperplanes parallel to $H$.
\\Let $d=\displaystyle\sum_{i=1}^k\alpha_i\leq q-1$. We want to prove that $d=s$.
\begin{itemize}\item First, assume that $d>s$. Then the degree of $P$ is $(t-1)(q-1)+q-1+s-d$ and $0\leq q-1+s-d\leq q-2$. For $c\not\in\{b_1,\ldots,b_k\}$, we consider \\$Q_c=(1-(x_1-c)^{q-1})P$. The degree of $Q_c$ is $t(q-1)+q-1+s-d$. So
\begin{eqnarray*}(q-s)q^{m-t-1}=\mathrm{Card}(S)&=&\sum_{c\not\in\{b_1,\ldots,b_k\}}\mathrm{Card}(S_P\cap\{x_1=c\})\\&\geq&(q-k)(q-(q-1+s-d))q^{m-t-1}\\&=&(q-k)(d-s+1)q^{m-t-1}\end{eqnarray*}
and we obtain, since $d\geq k$,
$$(q-1-k)(d-s)\leq 0,$$
which is impossible, since $d>s$ and $k<q-1$.
\item Now we have $d\leq s$.
\\So $\mathrm{deg}(P)=t(q-1)+s-d$ and we get
\begin{eqnarray*} (q-s)q^{m-t-1}=\mathrm{Card}(S)&=&\sum_{c\not\in\{b_1,\ldots,b_k\}}\mathrm{Card}(S_P\cap\{x_1=c\})\\&\geq&(q-k)(q-s+d)q^{m-t-2}\end{eqnarray*}
which gives $$(d-k)q+k(s-d)\leq 0.$$
Hence, since $d\geq k$, $k\geq1$ and $s\geq d$, necessarily $d=k=s$ \\and $\mathrm{Card}(S_P\cap\{x_1=c\})=q^{m-t-1}$ for $c\not\in\{b_1,\ldots,b_k\}$.\end{itemize}\end{proof}

Now, we are able to prove theorem \ref{min}.\\

\begin{proof}
We prove first the case where $t=0$ and $t=m-1$.
\\\\$\bullet$ \quad  $t=0$.
\\If $s=0$, then $\mathrm{deg}(f)=0$. Thus, since $f\neq0$, we have $f=c$, for $c\in\F{q}^*$ and $S_f=\F{q}^m$.
\\Otherwise, let $H$ be an affine hyperplane of $\F{q}^m$, then $\mathrm{Card}(S_f\cap H)=0$ or $\mathrm{Card}(S_f\cap H)\geq (q-s)q^{m-2}$.
\\Hence, by lemma \ref{4}, there exists an affine hyperplane $H_0$, such that $S_f\cap H_0=\emptyset$.
\\However, $\F{q}^m$ is the union of the $q$ hyperplanes parallel to $H_0$, so there exists $H_1$, parallel to $H_0$, such that $H_1\cap S_f\neq\emptyset$.
\\Furthermore, since $|f|=(q-s)q^{m-1}\geq2q^{m-1}$, $S_f\cap H_1\neq S_f$. So, by lemma \ref{5}, since $S_f\cap H_0=\emptyset$, $S_f$ meets $(q-s)$ hyperplanes parallel to $H_1$, say $H_1,\ldots,H_{q-s}$, in $q^{m-1}$ points, this means that $S_f=\displaystyle\bigcup_{i=1}^{q-s}H_i$.
\\\\$\bullet \quad\ t=m-1$.
\\Let $f\in R_q((m-1)(q-1)+s,m)$, $0\leq s\leq q-2$, such that $|f|=q-s$. We put $S=S_f$. Let $\omega_1$, $\omega_2\in S$
and $H$ be an hyperplane, such that $\omega_1$, $\omega_2\in H$.
\\Assume that $S\cap H\neq S$ then , by lemma \ref{5}, either $S$ meets all hyperplanes parallel to $H$ (which is possible  only if $s=0$) or $S$ meets $(q-s)$ hyperplanes parallel to $H$ in one point. In both cases we get a contradiction, since in both cases $S$ meets each hyperplane in exactly one point and $\omega_1$, $\omega_2\in H$.
\\So $S$ is included in all hyperplanes $H$, such that $\omega_1$, $\omega_2\in H$, this means that $S$ is included in the line through $\omega_1$ and $\omega_2$.
\\\\Now we prove the theorem for general $t$ by recursion.
\\\\$\bullet$\quad Assume that for a fixed $t$, $1\leq t\leq m-2$ (we have already proved the case where $t=0$) and for all $0\leq s\leq q-2$, the support of a codeword of minimal weight in $R_q((t+1)(q-1)+s,m)$ is the union of $(q-s)$ distinct parallel affine subspaces of codimension $t+2$ included in an affine subspace of codimension $t+1$.
\\\\Let $f\in R_q(t(q-1)+s,m)$, such that $|f|=(q-s)q^{m-t-1}$.
\\We put $S=S_f$. Let $a \in S$ and $F=\{\overrightarrow{ab},b\in S\}$. We have :
$$\mathrm{Card}(F)=(q-s)q^{m-t-1}\leq q^{rg(F)}.$$ Thus, since $0\leq s\leq q-2$, we have $rg(F)\geq m-t$.
\\Let $\overrightarrow{v_1},\ldots,\overrightarrow{v_{m-t}}$ be $m-t$ independent vectors of $F$ and $\overrightarrow{u}$, such that \\ $\overrightarrow{u}\not\in \mathrm{Vect}(\overrightarrow{v_1},\ldots,\overrightarrow{v_{m-t}})$.
\\Since $t\geq 1$, there exists an affine hyperplane, say $H$, such that\\ $a+\overrightarrow{v_1},\ldots,a+\overrightarrow{v_{m-t}}\in H$ and $a+\overrightarrow{u}\not\in H$.
\\Assume that $S\cap H\neq S$. Then by lemma \ref{5}, either $S$ meets all hyperplanes parallel to $H$ or $S$ meets $(q-s)$ hyperplanes parallel to $H$ in $q^{m-t-1}$ points.
\begin{itemize}\item\emph{$1$st case} : $S$ meets $(q-s)$ hyperplanes.
\\By applying an affine transformation, we can assume that the $q-s$ hyperplanes are $H_i=\{x,x_1=a_i\}$, $a_i\in \F{q}$. Without loss of generality, we can assume that $H=H_1$.
\\Let $P=\displaystyle\prod_{i=2}^{q-s}(x_1-a_i)f(x)$, \\ $\mathrm{deg}(P)\leq t(q-1)+s+q-1-s=(t+1)(q-1)$ and \\ $|P|=\mathrm{Card}(S\cap\{x_1=a_1\})=q^{m-t-1}$. So $P$ is a codeword of minimal weight in $R_q((t+1)(q-1),m)$, and, by recursion hypothesis, $S_p=S\cap H$ is an affine subspace of codimension $t+1$.
\item\emph{$2$nd case} : $S$ meets all the hyperplanes.
\\For all $G_a$ hyperplane of equation $(z=a)$, $a\in\F{q}$, parallel to $H=G_0$, \\$g_a=f.(1-(z-a)^{q-1})\in R_q((t+1)(q-1)+s,m)$ and $g_a\neq0$. So $\mathrm{Card}(S_{g_a})\geq(q-s)q^{m-t-2}$. \\Since $\mathrm{Card}(S)=(q-s)q^{m-t-1}$ and $S_{g_a}=S\cap G_a$, \\$\mathrm{Card}(S_{g_a})=(q-s)q^{m-t-2}$.
\\By recursion hypothesis, $S_{g_0}=S\cap H$ is included in an affine subspace of codimension $t+1$.\end{itemize}
In both cases, $S\cap H$ is included in  an affine subspace of codimension $t+1$ which is impossible since $a+\overrightarrow{v_1},\ldots,a+\overrightarrow{v_{m-t}}\in S\cap H$. \\So $S\cap H=S$ and $a+\overrightarrow{u}\not\in S$, which means that $\overrightarrow{u}\not\in F$.\\ Hence, $F\subset\mathrm{Vect}(\overrightarrow{v_1},\ldots,\overrightarrow{v_{m-t}})$, \emph{i.e} $S$ is included in an affine subspace of codimension $t$, say A.
\\\\By applying an affine transformation, we can assume that\\ $A=\{x,x_{1}=0,\ldots,x_t=0\}$. Then by corollary \ref{3}, we can write
$$f(x)=\prod_{i=1}^t(x_i^{q-1}-1)P(x_{t+1},\ldots,x_{m})$$
$P\in R_q(s,m-t)$ and $|P|=|f|=(q-s)q^{m-t-1}$, thus, by the case where $t=0$, $S_P$ is the union of $(q-s)$ parallel hyperplanes of $A$ which gives the result.
\end{proof}
\begin{acknowledgements}
I want to thank my supervisor, Jean-François Mestre, for his very helpful remarks.
\end{acknowledgements}

\end{document}